\newtheorem{theorem}{Theorem}
\newtheorem{definition}{Definition}
\newtheorem{lemma}[theorem]{Lemma}
\definecolor{orangef}{RGB}{210,100,20}
\newcommand{\id}{\mathds{1}}
\newcommand{\ii}{\mathrm{i}}
\begin{document}

\title{Accessible Quantum Gates on Classical Stabilizer Codes}

\author{Victor Barizien}
\affiliation{Institut de Physique Théorique, Université Paris-Saclay, CEA, CNRS, 91191 Gif-sur-Yvette, France}
\author{Hugo Jacinto}
\affiliation{Institut de Physique Théorique, Université Paris-Saclay, CEA, CNRS, 91191 Gif-sur-Yvette, France}
\affiliation{Alice \& Bob, 53 boulevard du Général Martial Valin, 75015 Paris, France}
\author{Nicolas Sangouard}
\affiliation{Institut de Physique Théorique, Université Paris-Saclay, CEA, CNRS, 91191 Gif-sur-Yvette, France}

\date{\today}

\begin{abstract}
  With the advent of physical qubits exhibiting strong noise bias, it becomes increasingly relevant to identify which quantum gates can be efficiently implemented on error-correcting codes designed to address a single dominant error type. Here, we consider $[n,k,d]$-classical stabilizer codes addressing bit-flip errors where $n$, $k$ and $d$ are the numbers of physical and logical qubits, and the code distance respectively.  We prove that operations essential for achieving a universal logical gate set necessarily require complex unitary circuits to be implemented. Specifically, these implementation circuits either consists of $h$ layers of $r$-transversal operations on $c$ codeblocks such that $c^{h-1}r^h \geq d$ or of $h$ gates, each operating on at most $r$ physical qubits on the same codeblock, such that $hr\geq d$.  Similar constraints apply not only to classical codes designed to correct phase-flip errors, but also to quantum stabilizer codes tailored to biased noise. This motivates a closer examination of alternative logical gate constructions using eg.~magic state distillation and cultivation within the framework of biased-noise stabilizer codes. 
\end{abstract}
\maketitle

\paragraph{Introduction—} Fault-tolerant quantum computation refers to the ability to execute large-scale quantum algorithms reliably, despite the presence of errors~\cite{Preskill98}. It relies on quantum error correction codes that encode quantum information across multiple physical qubits, with repeated parity measurements used to identify errors. To process the encoded information, computationally universal gate sets are employed in a way that limits the propagation of errors. The algorithm execution proceeds through a sequence of these gates interleaved with parity checks, continuing until the full computation is completed.

\medskip
\noindent
Among possible error correction codes, stabilizer codes distinguish themselves by parity check operators that are tensor products of single Pauli operators~\cite{Terhal15,gottesman97}. Although they are arguably among the most convenient to implement in practice~\cite{Ryan-Anderson2021,Krinner2022,Acharya2023,Bluvstein2023,Paetznick2024,Google25}, they come with inherent restrictions on the set of gates that can be implemented fault-tolerantly. The Bravyi-Koenig restriction~\cite{Bravyi13} for example, applies to quantum stabilizer codes on a D-dimensional lattice with $D \geq 2$, having a maximum range of the parity checks smaller than the code distance at the power $1/D$.
It states that the restriction to the codespace of any circuit for which the product of the circuit depth and the gate range is smaller than the code distance at the power $1/D$ can only implement unitaries belonging at most to the D$^{\text{th}}$ level of the Clifford hierarchy~\cite{Gottesman1999}.
For the surface code family~\cite{Bravyi98, Kitaev03, Dennis02, Bombin07, Fowler09, Fowler12} ($D=2$ geometry with the support of check operators of diameter $O(1)$), this means that constant-depth circuits made out of local gates can only implement encoded gates within the Clifford group.
Since at least one non-Clifford gate is needed to get a universal gate set, 
fault-tolerance in surface codes cannot be realized using gates admitting constant-depth implementation alone. 
Consequently, alternative strategies, such as magic state distillation~\cite{Bravyi05, Litinsky19} or cultivation~\cite{Knill96, Jones16, Chamberland20, Itogawa24, Gidney24}, are necessary for implementing non-Clifford gates.

\medskip
\noindent
While several refinements and improvements of the Bravyi-Koenig bound have been proposed in the framework of quantum error correction codes~\cite{Pastawski15, Barkeshi24}, analogous impossibility bounds on classical codes have not yet been explored. The value of such bounds is underscored by recent efforts to reduce the overhead of error correction by proposing physical qubit implementations exhibiting biased noise~\cite{Mirrahimi14, Puri17, Guillaud19, Puri20, Guillaud21, Chamberland22, Xu22}. Specifically, dissipatively stabilized cat qubits exhibit a bit-flip error rate that is exponentially suppressed with the average number of photons, at the cost of a linearly increasing phase-flip error rate. This stark asymmetry leads to bit-flip errors that are many orders of magnitude less frequent  than phase-flip errors~\cite{Lescanne20, Berdou23, Marquet24}. As a result, long gate sequences can be executed reliably by focusing error correction solely on the dominant type of error. This reduces significantly the physical qubit number with respect to standard approaches driven by quantum codes to execute quantum algorithms~\cite{Gouzien23}. Recent experimental results~\cite{Putterman2025} demonstrating a logical qubit memory based on cat qubits, protected by an outer repetition code of distance $d = 5$  underscore the practical viability of concatenated bosonic codes. These results also highlight the pressing need to identify the set of quantum gates that can be implemented efficiently within classical stabilizer codes, in order to guide upcoming experimental efforts and solidify concatenated bosonic codes as a promising architecture for fault-tolerant quantum computation.

\medskip 
\noindent 
Here, we focus on stabilizer codes that address one type of error, say bit-flip (X-type) error for concreteness (the same results apply to phase-flip errors). We prove that the only logical operations that can be done efficiently on these classical codes are the ones that evolve every $Z$ operator into a sum of products of $Z$ operators. Other operations — essential for achieving a universal logical gate set — can still be performed at the logical level, 
but their implementation necessarily requires complex unitary circuits made either with $h$ layers of $r$-transversal operations operating on $c$ codeblocks such that $c^{h-1} r^h \geq d$ where $d$ is the code distance, or a sequence of $h$ gates, each acting on at most $r$ qubits on the same codeblock, with the constraint  $hr \geq d$. Extensions of these bounds to quantum stabilizer codes designed to correct biased-noise are discussed. These practical constraints motivate a closer examination of constructions of logical gates not relying on unitary circuits, such as magic state distillation and cultivation within the framework of biased-noise stabilizer codes. 

\bigskip 
\noindent
\paragraph{Classical stabilizer codes—} Classical stabilizer codes can be characterized by a set of parameters $[n,k,d]$, where $n$ is the number of physical qubits, $k$ the number of logical qubits and $d$ the distance of the code. Each code is generated by a set of $\varsigma$ commuting stabilizers $\{\sigma_t\}_{1\leq t \leq \varsigma}$ that are products of Pauli $Z$ operators only, such that the codespace $\mathcal{C}$ is given by the intersection of the $+1$ eigenspace of all stabilizers. In general, the stabilizers are taken to be independent -- none of them is a product of any others -- implying that $k = \log_2 (\dim \mathcal{C}) = n-\varsigma$, and the codespace is isomorphic to a logical space $\mathcal{E}=(\mathbb{C}^2)^{\otimes k}$.

\medskip
\noindent
We say that a unitary operation $U$ on the physical space gives a logical operation if it preserves the codespace, i.e.~that $U(\mathcal{C}) \subset \mathcal{C}$. In this case, we denote the corresponding logical operation with respect to a given isometry $S:\mathcal{E}\to\mathcal{C}$ to be 
$U_L = S^\dagger U S$. 
Note that a given logical unitary operation $U_L$ can admit several implementations $U$ at the physical level.

\bigskip 
\noindent
\paragraph{Logical Pauli implementations—}
A class of logical operations of particular interest are those admitting an implementation as a product of physical Pauli operators. 
Let us first consider implementations involving products of Pauli $Z$ operators. For a given $\alpha \in \{0,1\}^n$, we denote by $Z_\alpha = \bigotimes_{i=1}^n (Z^{(i)})^{\alpha_i}$ the physical operator acting with Pauli $Z$ on each qubit $i$ for which $\alpha_i=1$, and the identity elsewhere. Since the stabilizers $\sigma_t$ involve only products of Pauli $Z$, $Z_\alpha$ commutes with every stabilizer and as such preserves the codespace $\mathcal{C}$. Thus, $Z_{L,\alpha}:=SZ_\alpha S^\dagger$ is a valid logical operator. 

\medskip 
\noindent
Note that the isomorphism $S$ is not unique but a canonical expression can be chosen from the parity-check matrix of the associated classical code, see End~Matter~A. In particular, this choice, up to a relabeling of the physical qubits, ensures that the individual physical $Z$ operators acting on the first $k$ qubits correspond directly to the logical $Z$ operators on the $k$ qubits of the logical space $\mathcal{E}$.
In other words, $Z_{L}^{(j)} := Z_{L,(\delta_{ij})_i} = SZ^{(j)}S^\dagger$, for $1\leq j\leq k$, corresponds to a Pauli $Z$ on the $j$-th logical qubit for this canonical isometry, see End~Matter~B. 

\medskip 
\noindent
Let us now consider the logical operators obtained from implementation involving products of Pauli $X$ operators.
From the classical definition of the code distance $d$ as the minimal non-zero Hamming weight among the codewords, $d$ corresponds to the minimal number of bit-flips, that is physical Pauli $X$ operations, that must be applied to transform one logical codeword into another without being detectable by the stabilizers. 
This implies that any products of strictly less than $d$ physical $X$ operations would not preserve the codespace. Note that this definition is a coherent adaptation of the distance of quantum codes (the minimum number of physical qubits that need to be acted upon to produce a non-trivial logical operation) for cases where only $X$-errors are detected. This definition of the code distance implies that
\begin{equation}
    \forall I \subset [n] \text{ s.t } |I|<d, \  S^\dagger \left( \bigotimes_{i\in I} X^{(i)}\right) S = 0_L \label{eq:Xrule}   
\end{equation}
for any choice of isometry $S$.

\bigskip 
\noindent
\paragraph{Efficient logical operations—} In the following, we consider a classical stabilizer code defined on a lattice of physical qubits. We consider $c$ codeblocks of such a code, which are characterized by the parameters $[cn,ck,d]$ as each codeblock is associated to the parameters $[n,k,d]$.
The physical Hilbert space $\mathcal{H}$ is thus naturally partitioned into the tensor product of $c$ Hilbert spaces $\mathcal{H} =\bigotimes_{m=1}^{c} \mathcal{H}_m$ each containing $n$ physical qubits. 

\medskip
\noindent
Let's consider an arbitrary logical operation $U_L$ and suppose that it admits an implementation $U$. Our aim is to compute the commutation relation of the logical operation $U_L$ and an arbitrary logical operator $Z_{L}^{(j)}$. From its definition, $Z_L^{(j)}$ can be implemented from a single physical $Z$ operators on the $j$-th physical qubits, that is $Z_L^{(j)} = S^\dagger Z^{(j)} S$. We consider the operator $K=UZ^{(j)} U^\dagger$. The evolution $Z^{(j)} \to U Z^{(j)} U^\dagger$ enlarges the support of $Z^{(j)}$ from $1$ to a given support which depends on the way $U$ is implemented. We will consider different implementations for $U$ below but for now, we simply denote by $\delta_m$
the support of $K$ on each codeblock, i.e.~$\delta_m = \operatorname{supp}(U Z^{(j)} U^\dagger)\cap \mathcal{H}_m$. We prove now that if $|\delta_m|<d$ for all $m$, then $U_L Z_L^{(j)}U_L^\dagger$ is a linear combination of product of logical Pauli $Z$ operators and the identity.

\medskip
\noindent
Indeed, for all $l$ the set of tensor products of all possible~$l$ Pauli operators (including identity) forms a real orthogonal basis for the vector space of $2l \times 2l$ hermitian matrices, and thus the operator $K$ can be decomposed as $K = \sum_\alpha b_\alpha K_\alpha$, where $b_\alpha$ are real numbers and $K_\alpha$ are products of Pauli operators and the identity with support $\delta_m$ on each codeblock. Using the fact that $U(\mathcal{C}) \subset \mathcal{C}$, we have $S^\dagger U = U_L S^\dagger$, and thus $S^\dagger K S = U_L Z_L^{(j)} U_L^\dagger$. However $S^\dagger K S = \sum_\alpha b_\alpha S^\dagger K_\alpha S$. If we assume that, for all $m$, $|\delta_m| < d$, no term $K_\alpha$ has support on at least $d$ physical qubits in the same codeblock and thus \cref{eq:Xrule} implies that no logical $X$ nor logical $Y$ appears in the sum. As such, one has $U_L Z_L^{(j)} U_L^\dagger = \sum_\alpha b_\alpha Z_{L,\alpha}$.

\medskip
\noindent
If this property holds true for all possible $j$, then for any product $Z_{L,\alpha}$, one has $U_L Z_{L,\alpha} U_L^\dagger = \sum_{\alpha'} b'_{\alpha'} Z_{L,\alpha'}$. In other words, if one denotes by $V_Z$ the algebra spanned by the $Z_{L}^{(j)}$, $U_L$ must satisfy $U_L V_Z U^{\dag}_L \subset V_Z$. We will use the contrapositive of this statement to derive lower bounds on the complexity of circuits implementing logical gate operations that do not preserve the structure of products of Z and identity operators under conjugation.

\medskip
\noindent
Note that equivalent derivations can be made in more general situations. First, each codeblock can be built out of different codes and have different parameters $[n_m,k_m,d_m]$ for each $m\in[c]$, as long as $n=\sum_m n_m$, $k=\sum_m k_m$ and $d=\text{min}_m~d_m$. Second, the physical circuit $U$ can map the codespace $\mathcal{C}$ to the codespace of another code $\mathcal{C}_2$. In this case, one can choose isometries $S_1$ and $S_2$ for each code.
An implementation $U$ of $U_L$ must satisfy $U(\mathcal{C}_1) \subset \mathcal{C}_2$ and $U_L = S_2^\dagger U S_1$. Since any logical $Z_L^{(j)}$ operation can be implemented with a physical Pauli $Z^{(j)}$ of support $1$ on the first code, a similar derivation shows that $U_L V_Z U^{\dag}_L \subset V_Z$ still holds as long as the support of $UPU^\dagger$ on each codeblock of the second code $\delta_m$ satisfies $|\delta_m|<d$ for all $m$.

\medskip
\noindent
In the following paragraphs, we consider different choices for possible physical implementations $U$ of logical gates $U_L$. We show how the property derived above translates into fine-grained bounds on the complexity of these physical implementations.

\begin{figure}
    \centering
    \includegraphics[width=0.8\linewidth]{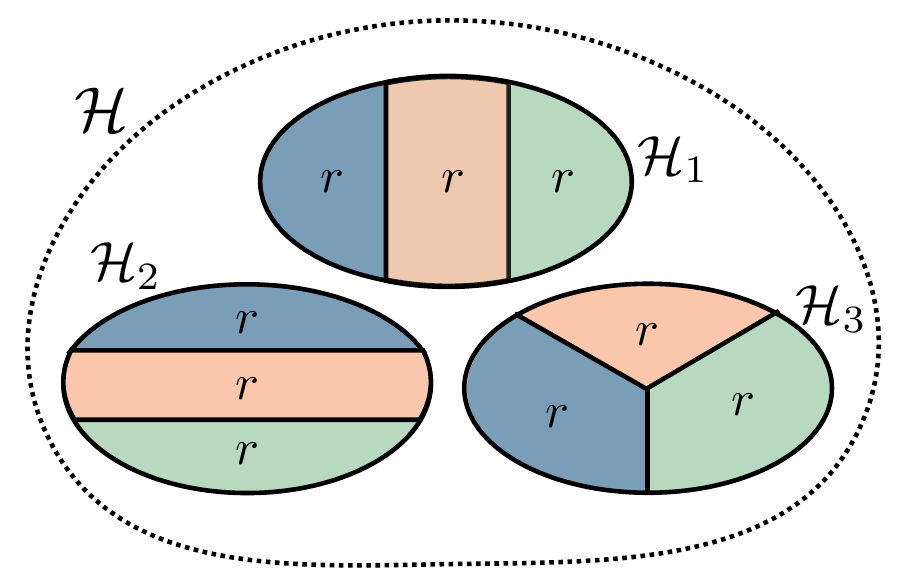}
    \caption{Illustration of a $r$-transversal partition of the physical space $\mathcal{H}$. The partition is given by three disjoint sets of physical qubits as $\mathcal{H}=\mathcal{G}_{\text{blue}}\sqcup \mathcal{G}_{\text{orange}} \sqcup \mathcal{G}_{\text{green}}$. The $c$ different codeblocks correspond to the circle area labelled by $\mathcal{H}_m$, and it holds that $|\mathcal{G}_{q} \cap \mathcal{H}_m|\leq r$, for all $m\in [c]$ and $q\in\{blue,green,orange\}$. An implementation $U$ is called $r$-transversal is there exists a $r$-transversal partition for which one can write $U=\bigotimes_q U_q$, where $\operatorname{supp}(U_q) \subset \mathcal{G}_q$. In this example, it would mean that $U=U_{\text{blue}} \otimes U_{\text{orange}} \otimes U_{\text{green}}$.}
    \label{fig:enter-label}
\end{figure}

\bigskip
\noindent
\paragraph{r-transversal gates—}
We here consider implementations $U$ made out of $r$-transversal operations. We start by reviewing the definitions of $r$-transversality, which are illustrated in~\cref{fig:enter-label}.

\begin{definition}[$r$-transversal partition]
    A $r$-transversal partition of the physical space is a partition of $\mathcal{H}$ into sets each containing at most $r$ qubits of each codeblock.
\end{definition}
\begin{definition}[$r$-transversal operation]
    A physical implementation $U$ is $r$-transversal if there exists a $r$-transversal partition of $\mathcal{H}$ such that $U$ only involves gates with support within the same set of the partition.
\end{definition}
\noindent
The case $r=1$ corresponds to transversal operations, that is, gates operating on a disjoint ensemble of qubits, with each gate acting on at most one qubit on each codeblock.  For example, a transversal CNOT gate between pairs of qubits belonging to two codeblocks of a repetition code is an example a 1-transversal operation. Transversal operations are of particular interest in the context of fault-tolerant quantum computation because gates having no share support do not spread Pauli errors within a codeblock. They can also be executed in parallel, hence avoiding noise on idle qubits.

\medskip
\noindent
If $U$ has a $r$-transversal implementation, it can be decomposed as a tensor product $U=\bigotimes_q U_q$ where each $U_q$ only acts within the same set of a $r$-transversal partition. For the rest of the paper, we label $P$ any physical $Z^{(j)}$ operator acting on a single qubit $j$. Since the support of $P$ is 1, there exists only one index $q$, labelled $q^\star$, such that $\operatorname{supp}(U_q)\cap\operatorname{supp}(P) \neq \varnothing$. As such, one has $UPU^\dagger = U_{q^\star} P  U_{q^\star}^\dagger$. Since the support of $U_{q^\star}$ in each codeblock is at most $r$, the support of the evolution $P\to UPU^\dagger$ can be upper bounded as
\begin{equation}
\label{bound_rtransverse}
    \forall m \in [c], ~|\delta_m|\leq  r .
\end{equation}

\bigskip
\noindent
\paragraph{$h$ layers of r-transversal gates—}
We now consider a physical implementation of the form $U = \prod_{\ell=1}^h U_\ell$ where $U_\ell$ are $r$-transversal operations, that is $h$ layers of $r$-transversal operations. Note that each layer may involve a different $r$-transversal partition of the physical space. In any case, the support of $P \rightarrow U P U^{\dagger}$ within each codeblock is upper bounded as 
\begin{equation}
        \forall m \in [c], ~|\delta_m|\leq  c^{h-1}r^h 
\end{equation} 
where $c$ refers to the number of codeblocks. This can be shown by induction on $h$. The case $h=1$ holds according to Eq.~\eqref{bound_rtransverse}. We assume that the result holds for implementations involving $h$ layers of $r$-tranversal gates. Now, consider an implementation with $h+1$ layers $U=\prod_{\ell=1}^{h+1} U_\ell.$ Since one has $UPU^\dagger = U_{h+1} \tilde{U}P\tilde{U}^\dagger U_{h+1}^\dagger$, where $\tilde{U}$ has $h$ layers of $r$-transversal gates, it suffices to look at how $U_{h+1}$ changes the support of $\tilde{U}P\tilde{U}^\dagger$ within each codeblock. Let's denote $\delta_{h}$ the support of $\tilde{U}P\tilde{U}^\dagger$, and $\delta_{m,h}$ its intersection with the codeblock $\mathcal{H}_m$. The worst-case scenario for support spreading is if every qubits in $\delta_h$ belongs to a different set of the partition associated to $U_{h+1}$. In this case, every qubits will be spread by the $U_{h+1}$ evolution to at most $r-1$ qubits in their own codeblock $m$.
Any other codeblock $m'\neq m$ will bring at most $r$ new qubits for each of the $\delta_{m',h}$ qubits in codeblock $m$.
In the end, we have
\[ |\delta_{m,h+1}|\leq  |\delta_{m,h}| + |\delta_{m,h}| (r-1)+\sum_{m'\neq m}|\delta_{m',h}| \cdot r. \]
Since the result holds for $h$ layers, we have for all $m$, $|\delta_{m,h}| \leq c^{h-1}r^h$, and as such
\[|\delta_{m,h+1}| \leq c\cdot r \cdot c^{h-1}r^h = c^{h}r^{h+1}.\]

\medskip
\noindent
\paragraph{Consecutive gates—}
We finally consider the case where $U$ is implemented with a sequence of $h$ successive gates $U = \prod_{\ell=1}^h U_\ell$, each gate $U_\ell$ having a support on at most $r$ qubits of each codeblock. Since every gate is applied successively, they can share support on common qubits. In this scenario, the support of $UPU^\dagger$ in each codeblock is enlarged by at most $r$ qubits at each layer,  so that 
\begin{equation}\label{conseutive gate constraint}
    \forall m \in [c], ~|\delta_m|\leq hr.
\end{equation}
Note that the first gate $U_1$ can actually be a $r$-transversal operation without modifying the bound as in the first layer, the support of $P$ is 1 in only one codeblock.

\bigskip 
\noindent
\paragraph{No-go theorem for classical stabilizer codes—}
The converse of the property described above together with the bound on the support of $UPU^\dagger$ leads to the following theorem: 
\begin{theorem} \label{lem:gateproperties}
    Consider $c$ codeblocks of a classical error-correcting stabilizer code that accounts for $X$ type errors with parameters $[n,k,d]$, together with an isometry $S:\mathcal{E}\to \mathcal{C}$. 
    Then, for any logical operator $U_L$, if there exists $j\in[n]$ such that 
    \begin{equation}\label{eq: not preserving Vz}
    U_L Z_L^{(j)} U_L^\dag \neq \sum_\alpha b_\alpha Z_{L,\alpha}
    \end{equation}
    where $Z_L^{(j)}:= S Z^{(j)} S^\dagger$ is the logical operator obtained from a Pauli Z on physical qubit $j$, $b_\alpha$ are real numbers and $Z_{L,\alpha}$ are products of logical Pauli $Z$ operators, then $U_L$ cannot be implemented at the physical level:
    \begin{enumerate}
        \item with a circuit of $h$ layers of $r$-transversal operations such that $c^{h-1}r^h<d$.
        \item nor with a circuit of $h$ gates, each operating on at most $r$ physical qubits in each code block, such that $hr<d$.
    \end{enumerate}
\end{theorem}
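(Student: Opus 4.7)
The plan is to prove each statement by contrapositive: assume $U_L$ does admit an implementation of the prescribed form, and derive $U_L V_Z U_L^\dag \subset V_Z$, in direct contradiction with \eqref{eq: not preserving Vz}.

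First I would invoke the two support-spreading bounds already established in the preceding paragraphs. For $h$ layers of $r$-transversal operations, the induction on the number of layers gives $|\delta_m| \leq c^{h-1} r^h$ for every codeblock $m$; for $h$ consecutive gates each of codeblock-support at most $r$, the additive accumulation $|\delta_{m,\ell+1}| \leq |\delta_{m,\ell}| + r$ yields $|\delta_m| \leq h r$. Under the two hypotheses $c^{h-1} r^h < d$ and $h r < d$ respectively, each case produces the uniform strict bound $|\delta_m| < d$ for all $m\in[c]$.

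Next I would feed this into the algebraic step already carried out in the text: expanding $K := U Z^{(j)} U^\dag$ in the real orthogonal Pauli basis restricted to $\operatorname{supp}(K)$, write $K = \sum_\alpha b_\alpha K_\alpha$ with each $K_\alpha$ a Pauli string whose codeblock-support is contained in that of $K$. Because $|\delta_m| < d$ on every codeblock, \eqref{eq:Xrule} forces $S^\dag K_\alpha S = 0_L$ whenever $K_\alpha$ contains any physical $X$ or $Y$ factor, so the only surviving terms are pure products of $Z$ and identity. Conjugating through the intertwining identity $S^\dag U = U_L S^\dag$ then gives $U_L Z_L^{(j)} U_L^\dag = \sum_\alpha b_\alpha Z_{L,\alpha}$ for every $j$, contradicting the hypothesis that some $j$ violates this form.

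Since all the technical work — the inductive analysis of $r$-transversal layers, the worst-case accounting for consecutive gates, and the Pauli-decomposition argument — has already been executed earlier in the excerpt, I would not expect any real obstacle beyond packaging. The only point requiring minor care is to verify that the strict inequalities $c^{h-1}r^h < d$ and $hr < d$ exactly match the strict inequality $|\delta_m| < d$ demanded by \eqref{eq:Xrule}, and that the single-$j$ quantifier in \eqref{eq: not preserving Vz} is correctly handled: one existing violating $j$ is enough to block the implementation, because the derivation produces a $Z$-only expansion for every $j$ simultaneously whenever such an implementation exists.
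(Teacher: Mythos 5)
Your proposal is correct and follows essentially the same route as the paper: the theorem is exactly the contrapositive of the support-spreading bounds ($|\delta_m|\leq c^{h-1}r^h$ for $r$-transversal layers, $|\delta_m|\leq hr$ for consecutive gates) combined with the Pauli-decomposition argument using Eq.~\eqref{eq:Xrule} and the intertwining relation $S^\dagger U = U_L S^\dagger$. Your handling of the quantifier on $j$ matches the paper's logic as well, so no gap remains.
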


\noindent
One can also directly study the size of the support of the implementation $U$ on each codeblocks. When using the decomposition of $U$ into Pauli strings and looking at the corresponding logical operations, it comes that if $U_L$ itself cannot be decomposed as a linear combination of logical $Z_{L,\alpha}$ operators, it cannot be implemented with a sequence of consecutive gates such that $hr<d$. This is more restrictive than \cref{lem:gateproperties}, point 2. 

\medskip 
\noindent
Naturally, a similar theorem can be formulated for classical stabilizer codes accounting for $Z$ errors, by simply swapping $X$ and $Z$ operators in the previous derivations. 

\medskip
\noindent
Interestingly, \cref{lem:gateproperties} can be extended to quantum stabilizer codes accounting for both bit-flip and phase-flip errors.
In this case, the two inequalities of 1.\@ and 2.\@ respectively become $d_Z c^{h-1}r^h<d_X$  and $(d_Z-1) + hr<d_X$ where $d_Z$ and $d_X$ are the code distances for $Z$ and $X$ errors. Those bounds are trivial for balanced error corrections ($d_Z=d_X$) but they remain very constraining for qubits subject to biased noises for which relevant codes satisfy $d_Z\ll d_X$.

\medskip 
\noindent 
When choosing the canonical isometry for the encoding, operators $Z_L^{(j)}$ correspond to Pauli-$Z$ on the $j$-th logical qubit. In this case, examples of gates that fall under the no-go of~Theorem~\ref{lem:gateproperties} include $H$, $\sqrt{X}$ and $\sqrt{\sqrt{X}}$ since they verify $H Z H^\dagger = X$, $\sqrt{X} Z \sqrt{X}^\dagger = -Y$ and $\sqrt{\sqrt{X}} Z \sqrt{\sqrt{X}}^\dagger = \cos{\pi/4} \id + \ii \sin{\pi/4} Y$. More generally, this holds for every operator of the form $\sqrt[q]{X}$ for $q\geq 1$.

\medskip 
\noindent
Note that the property to evolve any $Z$ operator to a sum of products of $Z$ operators is preserved under composition. Thus, the operators that can be implemented with either a circuit of $h$ gates, each operating on at most $r$ physical qubits, such that $hr<d$ or with a circuit of $h$ layers of $r$-transversal operations such that $c^{h-1}r^h<d$ form a strict subgroup of all possible operators, and thus cannot lead to a universal gate set, see End Matter~C.

\bigskip
\paragraph{Conclusion—} 
We have established strong constraints on the implementation of universal logical operations on classical stabilizer codes  and biased-noise quantum stabilizer codes. These results offer valuable guidance for the design of fault-tolerant gates on stabilized cat qubits protected by classical codes. When this approach is viewed through the lens of concatenated bosonic codes, the Eastin-Knill theorem~\cite{Eastin09} applies, ruling out the possibility of realizing a universal set of transversal gates. However, our results go further by providing a much more fine-grained characterization. They identify precisely which gates are hard to implement with unitary circuits and quantify, exactly rather than asymptotically, the circuit size required to realize them in various scenarios, including r-transversal layers and multi-qubit gate sequences. In particular, we show that relaxing strict transversality to r-transversality offers little advantage in terms of fault tolerance: achieving a universal gate set still requires exceeding the code's capacity. This highlights a clear need to explore alternative approaches to logical gate implementation, such as magic state distillation, within the framework of classical stabilizer codes and biased-noise quantum stabilizer codes.

\bigskip
\paragraph{Acknowledgments—} 
We thank E. Gouzien, J. Guillaud, and D. Ruiz for insightful discussions at an early stage of the project, and V. Savin for suggesting extensions to biased-noise codes and, more broadly, for providing valuable feedback on an earlier version of the manuscript. We acknowledge funding by Agence Nationale de la Recherche in the framework of France 2030 with the reference ANR-22-PETQ-0007 and project name EPiQ. This work was also partially supported by the French
National program Programme d’investissement d’avenir, IRT Nanoelec, with the reference ANR-10-AIRT-05.

\bibliographystyle{apsrev4-2}
\bibliography{sample}

\onecolumngrid
\ \\ \newpage
\section*{End Matter}
\twocolumngrid

\section*{A: Isometry choice for classical stabilizer codes}
\label{app:isometry_choice}

The isomorphism between the logical space $\mathcal{E}$ and the codespace $\mathcal{C}$ is not unique but a convenient expression can be chosen from the parity-check matrix of the associated classical code. Indeed, the stabilizer codes of interest account for bit-flip errors and thus involve only $Z$ type stabilizers that can be written as 
\begin{equation}
    \sigma_t = \bigotimes_{i = 1}^n (Z^{(i)})^{(h_t)_i}
\end{equation}
where $h_t \in \mathbb{F}_2^n$ is a binary vector that encodes whether the operator $Z^{(i)}$ appears in the product or not. As such, there is a canonical isomorphism between the stabilizers group $\langle\sigma_t,\cdot\rangle$ and the binary vectors group $\langle h_t, \oplus \rangle$. Since the stabilizers are independent, the set $h_t$ itself forms a free family of $\mathbb{F}_2^n$. They can thus be used to build the parity-check matrix $H$ of a classical linear code as the $(n-k)\times n$ matrix whose lines are made of the vectors $\{h_t\}_{1\leq r \leq n-k}$. This matrix is full-ranked and can therefore be written in the standard form $H  = [-P^T | I_{n-k}] \in \mathbb{F}_2^{(n-k)\times n}$ where $P^T$ is a $(n-k)\times k$ matrix and $I_{n-k}$ the $(n-k)$ identity matrix~\cite{Ling04}. 
The generating matrix of this code is given by $G=[I_k|P]\in \mathbb{F}_2^{k\times n}$ such that the following relation holds
\begin{equation} \label{eq:orthogonality}
    \bigoplus_{i=1}^n (g_s)_i (h_t)_i = 0,
\end{equation} 
for all $s\in[k], t\in[n-k]$ where $\{g_s\}_{1\leq s \leq k}$ denote the lines of $G$. Note that in the theory of classical linear codes, the distance $d$ of a code is given either by 
the minimal non-zero Hamming weight of the codewords.

\medskip 
\noindent 
The physical encoding of the logical words
$|\epsilon_1\rangle \otimes \hdots \otimes |\epsilon_k\rangle$ with $\epsilon_s\in \{0,1\}$ of the stabilizer code is specified by applying $G$ on $|\epsilon_1\rangle \otimes \hdots \otimes |\epsilon_k\rangle$, specifically
\begin{equation}
\label{Gepsilon}
    \ket{G_\epsilon} = \bigotimes_{i=1}^n \ket{\bigoplus_{s=1}^k \epsilon_s (g_s)_i}
\end{equation}
where $\epsilon$ denotes the vector $(\epsilon_s)\in \{0,1\}^k$. From the choice of the parity-check matrix, we have $(g_s)_i = \delta_{si}$ for any $1 \leq i\leq k$. Therefore, the vectors $\ket{G_\epsilon}$ are of the form
\begin{equation}
\label{Gepsilonbis}
    \ket{G_\epsilon} = \ket{\epsilon_1}\otimes...\otimes\ket{\epsilon_k} \bigotimes_{i=k+1}^n \ket{\bigoplus_{s=1}^k \epsilon_s (g_s)_i}.
\end{equation}
The $2^k$ codewords $\ket{G_\epsilon}$ form a basis of the codespace $\mathcal{C}$, that is they are linearly independent and $+1$ eigenstates of the stabilizers. Indeed, one can verify that $\langle G_{\epsilon}|G_{\epsilon'}\rangle = \delta_{\epsilon \epsilon'}$ from Eq.~\eqref{Gepsilonbis} and that the action of a stabilizer $\sigma_t$ is given by 
\begin{equation}
\nonumber
\begin{split}
   \sigma_t \ket{G_\epsilon} & = \left[\bigotimes_{i=1}^n (Z^{(i)})^{(h_t)_i}  \right] \bigotimes_{i=1}^n \ket{\bigoplus_{s=1}^k \epsilon_s (g_s)_i} \\
    & = \bigotimes_{i=1}^n (-1)^{(h_t)_i \left(\bigoplus_{s=1}^k \epsilon_s (g_s)_i\right)}  \ket{\bigoplus_{s=1}^k \epsilon_s (g_s)_i} \\
    & = (-1)^{\bigoplus_{i=1}^n (h_t)_i \left(\bigoplus_{s=1}^k \epsilon_s (g_s)_i\right)} \bigotimes_{i=1}^n \ket{\bigoplus_{s=1}^k \epsilon_s (g_s)_i} \\
    & = (-1)^{\bigoplus_{s=1}^k \epsilon_s \left( \bigoplus_{i=1}^n  (g_s)_i (h_t)_i \right)} \ket{G_\epsilon} \\
    & = \ket{G_\epsilon} 
\end{split}
\end{equation}
for any $t\in[n-k]$, $\epsilon \in \{0,1\}^k$, where we have used \cref{eq:orthogonality}. 

\medskip 
\noindent 
This sets a simple form of the isomorphism from the logical space $\mathcal{E}$ to the code space $\mathcal{C}$ as 
\begin{equation} \label{eq:isometry}
    S = \sum_{\epsilon_1,\hdots,\epsilon_k} \ket{G_{\epsilon}}\bra{\epsilon_1}\otimes...\otimes\bra{\epsilon_k}.
\end{equation}
One can verify that $S$ is an isometry -- $S^\dagger S = \id_\mathcal{E}$ and $SS^\dagger$ is the orthogonal projection onto $\mathcal{C}$. For a classical stabilizer code, we call the isometry $S$ given in \cref{eq:isometry} the \textit{canonical} isometry of the code. 

\medskip
\noindent
In all generality, the isometry between the logical space and the codespace is given by the canonical isometry $S$ up to a change of basis on both physical  and logical spaces. Therefore, one can write it in its most general form to be $S' = VSW^\dagger$ where $V$ and $W$ denote the basis change operations in the physical and logical spaces respectively. In order for $S'$ to be a valid isometry, it must verify that $S'S'^\dagger$ is the projection onto the codespace, and therefore that $V(\mathcal{C}) \subset \mathcal{C}$. This means that $V$ must preserve the codespace, and therefore, if we write $V|_\mathcal{C}=\sum_{\epsilon,\epsilon'} V_{\epsilon \epsilon'} \ket{G_\epsilon}\bra{G_{\epsilon'}}$ its restriction to the codespace, one has 
\begin{equation}
\label{eq:changeofbasis}
S' = VSW^\dagger = V|_{\mathcal{C}}SW^\dagger = S\Tilde{W}^\dagger
\end{equation}
where we introduced $\tilde{W} := W \cdot \sum_{\epsilon,\epsilon'} V_{\epsilon \epsilon'} \ket{\epsilon} \bra{{\epsilon'}}$. As such, any physical change of basis in the isometry can be properly regarded as a change of logical basis.

\section*{B : Logical Pauli operators for the canonical isometry} \label{app:isometry_choice}

Considering the canonical isometry $S$ and any  $1\leq j \leq k$, the $Z$-Pauli operator on logical qubit $j$ admits an implementation $Z^{(j)}$ at the physical level from a single $Z$-Pauli on qubit $j$, as we have
\begin{equation}
\nonumber
    \begin{split}
            S^\dagger Z^{(j)} S \ket{\epsilon_1} \otimes...& \otimes\ket{\epsilon_k}  = S^\dagger Z^{(j)} \ket{G_\epsilon} \\ 
            & = S^\dagger Z^{(j)} \bigotimes_{i=1}^n \ket{\bigoplus_{s=1}^k \epsilon_s (g_s)_i} \\
            & = S^\dagger (-1)^{\bigoplus_{s=1}^k \epsilon_s (g_s)_j} \bigotimes_{i=1}^n \ket{\bigoplus_{s=1}^k \epsilon_s (g_s)_i} \\
            & = (-1)^{\bigoplus_{s=1}^k \epsilon_s (g_s)_j} \ket{\epsilon_1}\otimes...\otimes\ket{\epsilon_k}\\
            & = (-1)^{\epsilon_j} \ket{\epsilon_1}\otimes...\otimes\ket{\epsilon_k} 
    \end{split}
\end{equation}
where we used $(g_s)_j = \delta_{sj}$ for $1\leq j \leq k$. 
This implies that any product of logical $Z$ operators can be implemented by a product of physical $Z$ operators with support on at most $k$ qubits. 
Note that, reciprocally, the action of the physical Pauli $Z^{(i)}$ operator on qubit $i$ implements the logical operation acting as a Pauli-$Z$ on each qubits for which $(g_j)_i =1$.

\medskip
\noindent
Finally, one implementation of the logical operator performing a Pauli-$X$ operation on the $j$-th logical qubit is given by the $j$-th line of the generating matrix $G$, as the operation applying a physical $X$ on each qubits $i$ for which $(g_j)_i=1$.

\section*{C: Non-universality of gates falling under Theorem~\ref{lem:gateproperties}} \label{app:not_universal}

For any other isometry choice $S'=SW^\dagger$, the corresponding logical operations $U_L'$ are equivalent to $U_L$ obtained for the canonical isometry up to a basis change, see~\cref{eq:changeofbasis}. As such, a given $U_L'$ with an implementation $U$ at the physical level for which the support of $UPU^\dagger$ is upper bounded by $d$ for any physical single Pauli Z operation $Z^{(j)}$ satisfies $U_L' W Z_L^{(j)} W^\dagger U_L'^\dagger = \sum_\alpha b_\alpha W Z_{L,\alpha} W^\dagger$, that is $U_L'(WV_ZW^\dagger) \subset WV_ZW^\dagger$ for the algebra $V_Z$ spanned by $Z_L^{(j)}$ corresponding to the canonical isometry. As such, to study universality, it suffices to focus on the canonical isometry, for which $V_Z$ is the subspace of all product of Pauli-$Z$ operations.

\begin{lemma}
    For a given operator $U_L$ acting on an Hilbert space of $k$ qubits, one has: 
    \begin{equation}
        \left(\forall j\in[k], \   U_L Z^{(j)}_L U_L^\dagger = \sum_\alpha b_\alpha Z_{L,\alpha} \right) \Longleftrightarrow U_L V_Z U_L^\dagger \subset V_Z
    \end{equation}
    where $V_Z$ denotes the algebra spanned by the $Z^{(j)}_L$ operators. 
\end{lemma}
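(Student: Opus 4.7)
The plan is to prove the two directions of the equivalence separately, with the reverse implication being essentially a direct consequence of how $V_Z$ is generated.

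For the ($\Leftarrow$) direction, I would simply observe that each generator $Z_L^{(j)}$ lies in $V_Z$. Thus, if $U_L V_Z U_L^\dagger \subset V_Z$, applying the inclusion to $Z_L^{(j)}$ gives $U_L Z_L^{(j)} U_L^\dagger \in V_Z$, and since the products $\{Z_{L,\alpha}\}_{\alpha \in \{0,1\}^k}$ span $V_Z$, the conjugate admits an expansion $\sum_\alpha b_\alpha Z_{L,\alpha}$. The reality of the coefficients $b_\alpha$ follows from $U_L Z_L^{(j)} U_L^\dagger$ being Hermitian, and from the $Z_{L,\alpha}$ being Hermitian and linearly independent.

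For the ($\Rightarrow$) direction, the key observation is that $V_Z$ is closed under multiplication, since $Z_{L,\alpha} \cdot Z_{L,\beta} = Z_{L,\alpha \oplus \beta}$ as a consequence of $(Z_L^{(j)})^2 = \id$ and the mutual commutation of the $Z_L^{(j)}$. I would first note that for any multi-index $\alpha$,
\begin{equation}
U_L Z_{L,\alpha} U_L^\dagger = \prod_{j : \alpha_j = 1} U_L Z_L^{(j)} U_L^\dagger,
\end{equation}
obtained by inserting $U_L^\dagger U_L = \id$ between consecutive factors and using the unitarity of $U_L$. By the hypothesis, each factor on the right lies in $V_Z$, and by multiplicative closure, the whole product lies in $V_Z$. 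Since every element of $V_Z$ is a linear combination of the $Z_{L,\alpha}$, linearity of conjugation then yields $U_L v U_L^\dagger \in V_Z$ for every $v \in V_Z$.

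I expect no substantive obstacle: the statement is a standard fact about invariance of a subalgebra being checked on generators. The only point meriting some care is verifying that the ``algebra spanned by'' the $Z_L^{(j)}$ does coincide with the linear span of the $Z_{L,\alpha}$, so that both the hypothesis and the conclusion refer to the same set; this coincidence is however immediate from $(Z_L^{(j)})^2 = \id$ and pairwise commutation, which close the set of products under multiplication without generating anything new.
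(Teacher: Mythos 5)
Your proof is correct and follows essentially the same route the paper takes (the lemma is left as the standard check-on-generators argument implicit in the main text): conjugate each generator, insert $U_L^\dagger U_L=\id$ to factor $U_L Z_{L,\alpha}U_L^\dagger$, and use that the $Z_{L,\alpha}$ are closed under multiplication since $Z_{L,\alpha}Z_{L,\beta}=Z_{L,\alpha\oplus\beta}$. Your explicit remarks on the reality of the coefficients via Hermiticity and on the use of unitarity of $U_L$ are consistent with the paper's intended setting and introduce no gap.
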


\begin{lemma}\label{lem:closedsubgroup}
    The subgroup $G$ of unitary operators verifying $U_L V_Z U_L^\dagger \subset V_Z$ 
    is closed, and thus not universal. 
\end{lemma}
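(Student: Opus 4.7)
The plan is to verify that $G$ is indeed a subgroup (which the statement already asserts but is worth confirming briefly), then prove topological closedness by a straightforward continuity/limit argument, and finally exhibit an explicit element of $U(2^k)$ lying outside $G$ so that the proper closed subgroup cannot be dense.

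First, I would observe that $G$ is a group. Closure under products is immediate: if $U_L, V_L \in G$, then $(U_L V_L) V_Z (U_L V_L)^\dagger = U_L(V_L V_Z V_L^\dagger) U_L^\dagger \subset U_L V_Z U_L^\dagger \subset V_Z$. For inverses, since $V_Z$ is a finite-dimensional subspace of operators and $A \mapsto U_L A U_L^\dagger$ is a linear bijection on the space of operators, its restriction to $V_Z$ is injective and hence the inclusion $U_L V_Z U_L^\dagger \subset V_Z$ is in fact an equality $U_L V_Z U_L^\dagger = V_Z$ for dimension reasons; applying $U_L^\dagger(\cdot)U_L$ to both sides gives $U_L^\dagger \in G$.

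Next, I would prove that $G$ is closed in the unitary group. Let $(U_n)_{n\in\mathbb{N}}$ be a sequence in $G$ converging to some unitary $U$. For any fixed $A \in V_Z$, the map $W \mapsto W A W^\dagger$ is continuous in $W$, so $U_n A U_n^\dagger \to U A U^\dagger$. Since $U_n A U_n^\dagger \in V_Z$ for all $n$ and $V_Z$ is a finite-dimensional (hence closed) linear subspace of the space of operators, the limit $U A U^\dagger$ belongs to $V_Z$. As this holds for every $A \in V_Z$, we get $U V_Z U^\dagger \subset V_Z$, i.e.~$U \in G$.

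For the non-universality, I would note that a gate set generates a universal group if and only if the group it generates is dense in $U(2^k)$ (up to an irrelevant global phase). Suppose for contradiction that $G$ were dense in $U(2^k)$. Then its closure $\overline{G}$ would be all of $U(2^k)$. But $G$ is already closed by the previous step, so $G = U(2^k)$. It remains to exhibit a unitary not in $G$: for instance, the single-qubit Hadamard acting on the first logical qubit satisfies $H Z_L^{(1)} H^\dagger = X_L^{(1)}$, which is not a linear combination of products of logical $Z$'s (e.g.~it anticommutes with $Z_L^{(1)}$ while every element of $V_Z$ commutes with $Z_L^{(1)}$), so $H \notin G$. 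This contradicts $G = U(2^k)$, hence $G$ is not universal.

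The only mildly delicate point is the subgroup verification, where one must invoke the dimension argument to upgrade the inclusion $U_L V_Z U_L^\dagger \subset V_Z$ to an equality before inverting; the closedness and non-universality steps are then standard topological consequences.
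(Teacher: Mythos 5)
Your proof is correct and follows essentially the same route as the paper: establish topological closedness by showing that conjugates of elements of $V_Z$ under a limit of elements of $G$ remain in the closed subspace $V_Z$ (your continuity-of-conjugation argument is the abstract version of the paper's explicit $2\varepsilon$ norm estimate), and then conclude non-universality because the closed group $G$ omits an explicit unitary such as $H$. Your additional verification that $G$ is genuinely a subgroup (upgrading the inclusion to an equality by a dimension count before inverting) is a welcome detail the paper leaves implicit, but it does not change the overall strategy.
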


\begin{proof}
    We consider the standard operator norm $|||.|||$, which is multiplicative, i.e.~$|||A\cdot B|||\leq|||A|||\times |||B|||$. Note that for any operator $U_L$, we have $|||U_L\dagger|||=|||U_L|||$ and $|||U_L|||=1$ iff $U_L$ is unitary. 
    
    Now we consider an arbitrary unitary operator $U_L$ that belongs to $\bar G$. For every $\varepsilon>0$, there exists an operator $g_\varepsilon \in G$ such that $|||U_L-g_\varepsilon|||\leq \varepsilon$. For every $P_L\in V_Z$ there exists $P_L' \in V_Z$ such that $g_\varepsilon P_L g_\varepsilon^\dagger = P_L'$. Then, it follows
    \begin{widetext}
    \begin{equation}
    \begin{split}
        |||P_L'-U_L P_L U_L^\dagger||| & = |||g_\varepsilon P_L g_\varepsilon^\dagger-U_L P_L^\dagger U_L^\dagger||| \\
        & \leq |||g_\varepsilon P_L g_\varepsilon^\dagger-U_L P_L g_\varepsilon^\dagger||| + |||U_LP_L g_\varepsilon^\dagger-U_L P_L U_L^\dagger||| \\
        & \leq |||P_L||| \times (|||g_\varepsilon^\dagger|||\times |||g_\varepsilon-U_L||| + |||U_L|||\times|||g_\varepsilon^\dagger-U_L^\dagger|||) \\
        & \leq 2\varepsilon
    \end{split}
    \end{equation}
    \end{widetext}
    Taking $\varepsilon \to 0$ in the above shows that $U_L P_L U_L^\dagger \in \Bar{V_Z}$. Yet, $\Bar{V_Z}=V_Z$ and thus $U_L P_L U_L^\dagger \in V_Z$ for all $P_L$, which is exactly $U_L\in G$. In the end, we proved that $G$ is closed, and since it does not contain every operators of $L(\mathcal{E})$ -- such as $H$ for example -- it cannot form a universal set of gates.
\end{proof}

\clearpage
\end{document}